\theoremstyle{definition}
\newtheorem{definition}{Definition}
\newtheorem{cor}{Corollary}
\newtheorem{lemma}{Lemma}
\newtheorem{problem}{Problem}
\theoremstyle{theorem}
\newtheorem{theorem}{Theorem}
\theoremstyle{remark}
\DeclareMathOperator{\E}{\mathbb{E}}
\newcommand\cycle[2][\,]{%
  \readlist\thecycle{#2}%
  (\foreachitem\i\in\thecycle{\ifnum\icnt=1\else#1\fi\i})%
}
\newcommand{\mr}[1]{\mathit{mr}#1}
\newcommand{\hits}[2]{\mathit{hits}_{#2}\left(#1\right)}
\newcommand{\probP}{\text{I\kern-0.15em P}}
\newcommand\bruhatleq{\leq_B}
\newcommand\bruhatlq{<_B}
\newcommand\bruhatlhd{\lhd_B}
\title{Symmetric Locality: Definition and Initial Results}
\author{
    \IEEEauthorblockN{Giordan Escalona
    }
    \IEEEauthorblockA{Department of Computer Science \\
    University of Rochester\\
    Rochester, NY \\
    gescalo2@u.rochester.edu
    }
    \and
    \IEEEauthorblockN{Dylan McKellips
    }
    \IEEEauthorblockA{Department of Computer Science \\
    University of Rochester\\
    Rochester, NY \\
    dmckelli@u.rochester.edu
    }
    \and
    \IEEEauthorblockN{Chen Ding
    }
    \IEEEauthorblockA{Department of Computer Science \\
    University of Rochester\\
    Rochester, NY \\
    cding@cs.rochester.edu
    }
}
\begin{document}

\maketitle
\pagestyle{plain}

\begin{abstract}
In this paper, we characterize \emph{symmetric locality}. In designing algorithms, compilers, and systems, data movement is a common bottleneck in high-performance computation, in which we improve cache and memory performance. We study a special type of data reuse in the form of repeated traversals, or re-traversals, which are based on the symmetric group. The cyclic and sawtooth traces are previously known results in symmetric locality, and in this work, we would like to generalize this result for any re-traversal.  Then, we also provide an abstract framework for applications in compiler design and machine learning models to improve the memory performance of certain programs.
\end{abstract}

\begin{IEEEkeywords}
    locality, cache, algebraic topology, poset complexes, machine learning, algorithms
\end{IEEEkeywords}

\IEEEpeerreviewmaketitle

\section{Introduction}\label{introduction}
\noindent Since data movement is a common bottleneck in high-performance computation, locality is a crucial consideration in designing algorithms, programming, and compiler optimization. In practice, this means improving cache and memory performance.  Hardware caches store recently accessed data blocks, so do operating systems for data pages. Locality of single data elements comes from data reuse.  

\IEEEPARstart{I}{n} this paper, we consider repeated accesses to a collection of data. Each time the data is traversed again is called a \emph{data re-traversal}.  A re-traversal order may be identical, i.e., the same order each time, or different.  This paper presents a theory of the locality of all re-traversal orders.

\IEEEPARstart{T}{wo} types of re-traversal order are common: the cyclic order and the sawtooth order.  In a \emph{cyclic} re-traversal, the order of access is the same as the previous traversal.  
In a \emph{sawtooth} re-traversal, the order is reversed from the previous traversal.  Since cache typically stores recently accessed data, the locality of access corresponds to recency.  A measure of recency is the \emph{reuse distance}, which is the amount of data accessed between two consecutive accesses to the same datum~\cite{relational-locality}.  The reuse distance is first defined by Mattson et al. and called the LRU stack distance~\cite{Mattson+:IBM70}.  

\IEEEPARstart{T}{he} cyclic order has the worst locality in that its reuse distance is the maximal possible.  The cyclic traversal is often referred to as streaming access.  It is widely used as a microbenchmark.  For example, the STREAM benchmark is a standard test to measure the memory bandwidth on a computer~\cite{stream}.  STREAM contains four kernels that each traverse a different number of arrays in the cyclic order.  This benchmark uses it so there is no cache reuse due to poor locality, and data must be transferred repeatedly from memory.

\IEEEPARstart{T}{he} sawtooth order has better locality than cyclic, and more precisely it has the best recency --- the last accessed data is the first to be reused.  Many 
techniques are designed to induce sawtooth data reuse, including the call stack of a running program, insert-at-front heuristic for free lists used by a memory allocator, and the move-to-front heuristic in list search~\cite{SleatorT:CACM85}.

\IEEEPARstart{W}{e} formulate the set of all re-traversal orders, of which cyclic and sawtooth are two members.  The set of re-traversals of $m$ objects corresponds to the set of $m!$ permutations of those objects. The cyclic order corresponds to the identity permutation, and the sawtooth order to the reversed identity permutation.  The set of all permutations of $m$ elements forms a symmetric group $S_m$ as a graded poset (partially ordered set) in algebraic topology~\cite{wachs2006poset}.  We characterize the locality of all possible 
traversal orders and call it the \emph{symmetric locality}.\label{symmlocaldef}

\IEEEPARstart{S}{ymmetric} locality generalizes cyclic and sawtooth orders and encompasses all possible traversal orders.  It may be used in locality optimization: when improving the locality of repeated data traversals, and reordering is restricted, we can choose the one with the best symmetric locality.  In particular, deep learning applications, including the transformer algorithm used in Generative AI systems, repeatedly access the same parameter space.  Symmetric locality can be used to characterize the effect of different traversal orders.

\IEEEPARstart{T}{he} primary contributions of this paper as as follows: 
\begin{itemize}
    \item We characterize locality for symmetric groups, and prove an equivalence between locality ordering and inversion number of permutations
    \item We develop a novel algorithm for calculating reuse distance for traces
    \item We develop a novel algorithm for traversing symmetric groups with optimal locality
\end{itemize}

\IEEEPARstart{T}{he} rest of the paper is organized as follows. 
Section \ref{symmlocal} gives a brief setup of the context and the main problems of the paper. Section \ref{prelim} gives preliminaries in group theory and basic algebraic topology that is needed to understand Section \ref{bruhat}, which contains the theoretical and experimental foundations for our symmetric theory of locality. Section \ref{section:chainfind} defines the chain-finding algorithm and the various total orderings needed for the algorithm. Section \ref{discussion} discusses the applications for symmetric locality, including deep learning with permutation equivariant models and data.

\section{Problem Statement}\label{symmlocal}
\subsection{Outline}
\noindent 
Let $\mathcal{M} = \{1, 2, 3, \hdots, m\}$, where $m$ is the number of trace elements or distinct memory addresses.
A trace $\mathcal{T} = (t_i)_i$ is a sequence of memory addresses such that each $t_i \in \mathcal{M}$.
As a shorthand, we write $\mathcal{T} = A\,B$, where $A = (t_i)_{i=1}^m$ are the first $m$ accesses, and $B = (t_i)_{i = m + 1}^{2m}$ are the next $m$ access after $A$. In particular, we assume that $A$ and $B$ each contain all of $\mathcal{M}$.
\begin{definition}[Re-traversals]
Let $\sigma \in S_m$, where $S_m$ is the symmetric group \ref{symmetric_group} of $m$ elements, and $\sigma$ is a permutation pm $m$ elements as defined in the appendix \ref{appendix}. We construct traces of the form $\mathcal{T} = A\,B$, where
\begin{align*}
    A &= (t_i)_{i=1}^m\;&\;t_i &= i \\
    B &= (t_i)_{i = m + 1}^{2m}\;&\; t_{m+i}  &= \sigma(t_i)
\end{align*}
 This yields $\mathcal{T} = A\,\sigma(A)$, and we can think of $B$ as a reordering of $A$. We call such a trace a periodic trace or a \textbf{re-traversal}.
\end{definition}

For the theory, we assume fully associative caches using least-recently used replacement (LRU).  Modern caches may not use LRU, may be set associative, and often have multiple cache levels.  We consider fully associative LRU cache for two reasons.  First, it is amenable to theoretical analysis in reference to a symbolic cache size $c$.  Second, LRU has the strongest theoretical guarantee in that no online algorithm has better amortized performance than LRU~\cite{SleatorT:CACM85}, and hence most cache policies use some variant of LRU.

\begin{definition}[Miss Ratio]
    For a given $c \in \mathbb{N}$, $\mathit{mr}(c; \mathcal{T})$ denotes the \textbf{miss ratio} at cache size $c$ for trace $\mathcal{T}$. $\mathit{MRC}(\mathcal{T})$ is the miss ratio curve for trace $\mathcal{T}$:
    \[\mathit{MRC}(\mathcal{T}) = \{(c, \mathit{mr}(c; \mathcal{T}))\,:\, \forall c \geq 0\}\]
\end{definition}


\IEEEPARstart{W}{e} are interested in the following problems:

\begin{problem}
\end{problem}
    Let $\mathcal{M} = \{1, 2, 3, \hdots, m\}$ and a cache size $c \leq m$. Let $S_m$ be the symmetric group on $m$ objects, and by observing $\sigma \in S_m$, where $B = \sigma(A)$, what is the miss ratio $\mr(c)$ of $\mathcal{T} = A\,B \;\; \forall c > 0$?
\begin{problem}
    For some program trace $\mathcal{T} = A\,B$, how can we optimize locality by reordering $B$? As in, if $\sigma(A) = B$, is there another $\tau \in S_m$ such that $\tau(A)$ is a valid trace of the program that preserves correctness as well as improves locality?    
\end{problem}
\IEEEPARstart{T}{he} first problem will be the running topic of the paper. The second problem covers applications with regard to machine learning, and will be discussed with more brevity in section \ref{discussion}.

\subsection{Notation}
\noindent Throughout the paper, we are dealing with traces of the form $\mathcal{T} = A\,B$. $A$ can always be understood to be the generic trace of $m$ elements in this fashion: $1, 2, 3, \hdots m$, and if it is not, we can use a relabeling argument. Since $B = \sigma(A)$, and $\sigma$ is a bijective function by construction, we can understand any re-traversal by the permutation $\sigma$ that generates it. Therefore, we will use either characterization interchangeably. Lastly, $A$ is understood to be ordered as in the definition of $\mathcal{M}$, unless explicitly stated.

\section{Preliminaries}\label{prelim}
\subsection{Locality Measures}
\noindent Closely related to $\mr{(c)}$ is the concept of \emph{cache hits}. There exists a hit, if at time $i$ trace element $t_i$ exists in the cache. \begin{definition}
    For a given $\mathcal{T} = A\,B$, $|A| = |B| = m$, let $\hits{\mathcal{T}}{c} \in \{0, 1, \hdots, c\}$ be the number of LRU cache hits that arise after traversing $\mathcal{T}$ with a cache of size $c$.
\end{definition} It may be helpful to refer to $\hits{\mathcal{T}}{C}$, which is the $m$-sized vector: 
\[ \hits{\mathcal{T}}{C} = \left(\hits{\mathcal{T}}{1}, \hits{\mathcal{T}}{2}, \hdots, \hits{\mathcal{T}}{m}\right)\]
For example, let $\mathit{sawtooth}_4$ refer to a well known trace of $m=4$ data items: $a, b, c, d, d, c, b, a$. Upon inspection: \[\hits{\mathit{sawtooth}_4}{C} = (1, 2, 3, 4).\]
The miss ratio for a given $\mathcal{T}$ and $c$ is then given by $\mr{(c; \mathcal{T})} = 1 - \frac{\hits{\mathcal{T}}{c}}{\#\mathit{accesses}}$. 
For the sake of convenience, we will use the $\mathit{hits}$ formula for an equivalent description of locality.

\begin{definition}[Reuse Interval]
    The \textbf{Reuse Interval} of an element in a trace is the number of memory accesses between two accesses of this element. For example, in the trace $abcabc$, the first $a$ has reuse interval 3 as there are three accesses between the first and second $a$, and the second $a$ has reuse distance $\infty$ as there is no third access.
\end{definition}

\begin{definition}[Reuse Distance]
    The \textbf{Reuse Distance} of an element in a trace is the number of unique memory accesses between two accesses of this element. It is equivalent to LRU stack distance. \cite{Mattson+:IBM70} In the previous example, $abcabc$, the reuse distance is the same as reuse interval. However, in the trace $abccba$, the reuse distance of the first access of a would still be 3.
\end{definition}

\subsection{Algebraic Topology}\label{group-theory} 

A review of introductory group theory is covered in the Appendix \ref{groups} In this subsection, we introduce results studied in Algebraic Topology. Primarily, the classification of $S_m$ as a Coxeter group\footnote{While $S_m$'s identity as a Coxeter group implies the existence of the Bruhat order, we do not focus on the Coxeter group, but instead on the partial order.} yields a natural partial order, the \emph{Bruhat order}, defined in section \ref{bruhat}. We show that this Bruhat order coincides with the classifications and orderings of re-traversals based on locality. This provides us a framework with which to analyze orderings.

\subsection{Locality Poset}
\begin{definition}
    Let $\ell(\sigma)$ denote the \emph{minimum} possible length  of the cycle decomposition \ref{cycledecomp} of $\sigma$, ie if $\sigma = \sigma_1 \sigma_2 \sigma_3 \hdots \sigma_n$ where each $\sigma_i$ is an adjacent-disjoint 2-element swap, then $\ell(\sigma) = n$. \\ \emph{Example}: $(13) = (23)(12)(23) \implies \ell(13) = 3$.
\end{definition}

\begin{lemma}
Let $\sigma \in S_m$. Pick $i < j$, if $\sigma(i) > \sigma(j)$, then the pair $(\sigma(i), \sigma(j))$ is an $\textbf{inversion}$ of $\sigma$. Furthermore, $\ell$ is equivalent to the \emph{number of inversions} of $\sigma$:
\[ 
    \ell(\sigma) = \left|\{(i, j)\,:\,\sigma(i) > \sigma(j)\}\right|.\text{\cite{phdthesis}}
\]
\emph{Example}: $\ell (\mathit{sawtooth}_4) = 6$. A simple way to find $\ell(\sigma)$ is to calculate $\sigma(A)$, then count how many $(i, j)$ such that $\sigma(A)[j] > \sigma(A)[i]$ (using 1-indexing) For the trace $\sigma(A) = 2134$, consider the pair (1,2) which has $\sigma(A)[1]>\sigma(A)[2]$, which is an inversion.
\end{lemma}

\begin{lemma}\label{invlemma}
    $\forall \tau \in S_m, \forall \sigma_i \in \mathcal{S}$, the generators of $S_m$,
    \[
        \ell(\tau \sigma_i) = \begin{cases}
            \ell(\tau) + 1 & \tau(i) < \tau(i + 1) \\
            \ell(\tau) - 1 & \tau(i) > \tau(i + 1)
        \end{cases}.\text{\cite{phdthesis}} 
    \]
\end{lemma}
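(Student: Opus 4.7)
The plan is to first establish the classical identity $\ell(\tau) = \mathit{inv}(\tau)$ and then read off the lemma from a direct count of how $\mathit{inv}$ changes under right-multiplication by an adjacent transposition. The key elementary observation is that $\tau\sigma_i$ is obtained from $\tau$ in one-line notation by transposing the entries in positions $i$ and $i+1$, since $(\tau\sigma_i)(i) = \tau(i+1)$, $(\tau\sigma_i)(i+1) = \tau(i)$, and $(\tau\sigma_i)$ agrees with $\tau$ elsewhere.

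Using this, I would show that the inversion count changes by exactly $\pm 1$. Pairs of positions that avoid both $i$ and $i+1$ are untouched by the swap. For any other index $j \notin \{i, i+1\}$, the two pairs $\{j, i\}$ and $\{j, i+1\}$ jointly contribute the same number of inversions before and after, because they simply exchange their indicator values under the swap: the same two values $\tau(i), \tau(i+1)$ are compared against $\tau(j)$, only the position labels differ, and since $i$ and $i+1$ lie on the same side of $j$, the two indicator contributions are merely permuted. The one pair whose inversion status truly flips is $\{i, i+1\}$ itself, giving $\mathit{inv}(\tau\sigma_i) = \mathit{inv}(\tau) + 1$ when $\tau(i) < \tau(i+1)$ and $\mathit{inv}(\tau\sigma_i) = \mathit{inv}(\tau) - 1$ otherwise.

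To conclude I need $\ell = \mathit{inv}$. The upper bound $\ell(\tau) \leq \mathit{inv}(\tau)$ is a bubble-sort argument: while $\tau$ has any descent $\tau(i) > \tau(i+1)$, replace $\tau$ by $\tau\sigma_i$ to strictly drop $\mathit{inv}$ by one; after $\mathit{inv}(\tau)$ such swaps $\tau$ becomes the identity, and reversing the sequence yields an adjacent-transposition expression for $\tau$ of length $\mathit{inv}(\tau)$. The lower bound $\ell(\tau) \geq \mathit{inv}(\tau)$ follows because each right multiplication by some $\sigma_i$ shifts $\mathit{inv}$ by exactly $\pm 1$, and the identity has $\mathit{inv} = 0$, so no shorter expression is possible. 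Combining gives $\ell(\tau\sigma_i) = \mathit{inv}(\tau\sigma_i) = \mathit{inv}(\tau) \pm 1 = \ell(\tau) \pm 1$ with the sign dictated by the stated descent condition.

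The main obstacle is the bookkeeping in the middle step: the case analysis over $j < i$ versus $j > i+1$ must be done carefully to confirm that the combined contributions of $\{j, i\}$ and $\{j, i+1\}$ always cancel, so that only the pair $\{i, i+1\}$ produces a net change. Once that cancellation is established cleanly, the identification $\ell = \mathit{inv}$ and the final lemma both follow essentially for free.
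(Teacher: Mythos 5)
Your proposal is correct, and it is the standard textbook argument: identify $\ell$ with the inversion number $\mathit{inv}$, observe that right-multiplication by $\sigma_i$ swaps positions $i$ and $i+1$ in one-line notation, and check that only the pair $\{i,i+1\}$ changes its inversion status while the pairs $\{j,i\}$, $\{j,i+1\}$ for $j\notin\{i,i+1\}$ jointly cancel. The bubble-sort upper bound and the parity/step-size lower bound for $\ell=\mathit{inv}$ are also right. There is nothing to compare against in the paper itself: the authors do not prove this lemma but defer entirely to the cited thesis, remarking only that the proof and the connection of $\ell$ to the inversion number are covered there. So your write-up supplies exactly the argument the paper leaves implicit; the only remaining work is the bookkeeping you already flag (the $j<i$ versus $j>i+1$ cases), which is routine and does not hide any difficulty.
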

\emph{Example}: Consider $S_5$. Let $\tau = (13)$, and $\sigma_3 = (34)$. Then, $\ell(\tau) = \ell((12)(23)(12)) = 3 \implies \ell(\tau \sigma_3) = 4$.

This lemma also extends to $T \subset S_m$, where $T$ is the set of all simple swaps in $S_m$.

Let $H = (V, E)$ be a digraph (directed graph), such that $V = S_m$ for a given $S_m$. For a given $\sigma, \tau \in V$, we denote $\sigma \bruhatlhd \tau$, if $\exists t \in T$, a \textit{swap} between only two elements such that $\tau = \sigma t$ and $\ell(\sigma) 
 + 1 = \ell(\tau)$. Thus, we let $E$ be the collection of all such relations:
\[ E = \{(\sigma, \tau)\;:\; \sigma \bruhatlhd \tau \}. \]
Since $E$ is generated by $\bruhatlhd$, we use either of these interchangeably.

Let $\lambda$ be an edge labeler defined such that if $Q$ is a totally ordered set,
\[ \lambda : \{(\sigma, \tau)\;:\; \sigma \bruhatlhd \tau\} \to Q. \]
For a given starting point $x \in S_m$, we are interested in traversing or “ascending” the graph by picking the maximal or minimal edges wrt to $\lambda$. We call such paths \textbf{chains}, and the associated greedy algorithm \textit{ChainFind}. This algorithm and $\lambda$ will be discussed later in section \ref{section:chainfind}.

We would also like to note that $H$ has origins from being defined as a \textit{covering graph} using the \textit{inversion} number. In this work, we refer to $H$ as a covering graph interchangeably as the digraph defined here. The \textbf{Bruhat Order} $\bruhatlhd$ arises naturally from its classification as a covering graph. For formal background, refer to Appendix \ref{appendix:covering}. 

\section{Symmetric Theory of Locality}\label{bruhat}
\noindent In this section, we will motivate the usage of Coxeter-Symmetric groups and, more specifically, the Bruhat order to analyze how locality changes w.r.t. the symmetric group $S_m$. We show that the Bruhat order coincides with locality through theoretical proof and experimental results, and include additional miscellaneous results in \ref{misc}. 


\subsection{Inversions}
Section \ref{group-theory} gives a definition of $\ell(\sigma)$, the inversion number of $\sigma \in S_m$. This quantity has been utilized to study efficiency of sorting algorithms. \cite{inversions} The inversion number corresponds to lengths of chains on the covering graph formed by the Bruhat order.

\subsection{Identity} 

By construction, the identity permutation, or the cyclic trace is considered the smallest element in $(S_m, \bruhatlhd)$. More formally, $\ell(e) = 0,$ where $e$ is the identity permutation. This trace, also known as the cyclic trace, has the worst locality, as well.

\subsection{Reverse Identity} $S_m$ is a poset defined by the Bruhat order, and in particular it is a graded poset. In a graded poset, its maximal possible chains from the lowest ordered element to the highest are all the same length, denoted $\ell(S_m)$. The sawtooth trace, which also refers to the reverse identity permutation, has the best locality and has the maximal $\ell$, by the Bruhat order.
\subsection{Locality}

The foundation of the connection between the Bruhat Order imposed on $S_m$ and the locality of a re-traversal is formally stated in the theorems below. 

\begin{theorem}[Reuse Distance Calculation]
We provide an algorithm for calculating the cache hit vector iteratively, which also connects the reuse distance of an element with the inversion number of an element. 
\end{theorem}
\IEEEPARstart{W}{e} are caching with LRU stack distance, also known as reuse distance. 
The cache hit vector corresponds directly to reuse distance; $\mathit{hits}_c(\sigma)$ is exactly the number of elements with a reuse distance of $c$ or smaller. This is a direct result of using LRU caching for counting cache hits, as reuse distance is also known as LRU stack distance. \\We calculate the reuse interval of an element $a \in A$, and subtract the number of repeated elements between each access of $a$.  Denote the rank $r(a)$ of $a$ to be $n-a + 1$. Then, we see the reuse interval is  $r(a)-1 + i$, where $i$ is the index of $a \in \sigma(A)$. The highest rank will be 1 in order for this to correspond with integers $1\dots n$, and similarly we will index arrays starting at 1. \\
\IEEEPARstart{I}{n} order to convert this to reuse distance, we must subtract the number of repeated values. To do this we keep track of a binary vector c, which flips a bit at place $r$ if the element at rank $r$ is accessed. Then, at the time we access element $a$, the sum $\sum_{i = 1}^{r(a) - 1} c[i]$ will calculate the number of repeats we have seen (this also counts the inversion number induced by $a$).  Taking this together, we have our formula for reuse distance $r-1+1 - \sum_{i = 1}^{r(a) - 1} c[i]$. We increment the reuse distance histogram (rdh) and cache hit vector (chv) at this index by 1. Since a cache hit at size $i-1$ will also be a cache hit at size $i$, we add $chv[i-1]$ to $chv[i]$. 
\begin{algorithm}[!h]
    \caption{Reuse Distance Histogram}
    \begin{algorithmic}
        \For{$k \in \sigma(A)$}
            \State $r \gets n - k + 1$
            \State $c[r] \gets 1$
            \State  $repeats \gets \sum_{i=0}^{r-1} c[i]$
            \State $rdh[r-1 + i - repeats] \gets + 1$
            \State $chv[r - 1 + i - repeats] \gets +1$
            \State $chv[i] \gets +chv[i-1]$
            \EndFor
    \end{algorithmic}
\end{algorithm}

We provide an example below:
\begin{alignat*}{9}
\intertext{Denote that we have seen 2, the element of rank 3}
    4 & \quad & 3 & \quad & 2 & \quad & 1 & \quad & 3 & \quad & 4 & \quad & 2 & \quad & 1 & \quad & r \\
    1 & \quad & 2 & \quad & 3 & \quad & 4 & \quad & 2 & \quad & 1 & \quad & 3 & \quad & 4 & \quad & A \, \sigma(A) \\
      & \quad &   & \quad &   & \quad &   & \quad & 0 & \quad & 0 & \quad & 1 & \quad & 0 & \quad & c \\
      & \quad &   & \quad &   & \quad &   & \quad & 0 & \quad & 0 & \quad & 0 & \quad & 0 & \quad & \text{RD Histogram} \\
      & \quad &   & \quad &   & \quad &   & \quad & 1 & \quad & 2 & \quad & 3 & \quad & 4 & \quad & \text{Index} \\
      & \quad &   & \quad &   & \quad &   & \quad & 0 & \quad & 0 & \quad & 0 & \quad & 0 & \quad & \text{Cache Hit Vector}\\
      \intertext{Increment rdh and chv at index 3 - 1 + 1 - 0 = 3}
    4 & \quad & 3 & \quad & 2 & \quad & 1 & \quad & 3 & \quad & 4 & \quad & 2 & \quad & 1 & \quad & r \\
1 & \quad & 2 & \quad & 3 & \quad & 4 & \quad & 2 & \quad & 1 & \quad & 3 & \quad & 4 & \quad & A \, \sigma(A) \\
      & \quad &   & \quad &   & \quad &   & \quad & 0 & \quad & 0 & \quad & 1 & \quad & 0 & \quad & c \\
      & \quad &   & \quad &   & \quad &   & \quad & 0 & \quad & 0 & \quad & 1 & \quad & 0 & \quad & \text{RD Histogram} \\
      & \quad &   & \quad &   & \quad &   & \quad & 1 & \quad & 2 & \quad & 3 & \quad & 4 & \quad & \text{Index} \\
      & \quad &   & \quad &   & \quad &   & \quad & 0 & \quad & 0 & \quad & 1 & \quad & 1 & \quad & \text{Cache Hit Vector}
      \intertext{Denote that we have seen 1, the element of rank 4}
4 & \quad & 3 & \quad & 2 & \quad & 1 & \quad & 3 & \quad & 4 & \quad & 2 & \quad & 1 & \quad & r \\
1 & \quad & 2 & \quad & 3 & \quad & 4 & \quad & 2 & \quad & 1 & \quad & 3 & \quad & 4 & \quad & A \, \sigma(A) \\
      & \quad &   & \quad &   & \quad &   & \quad & 0 & \quad & 0 & \quad & 1 & \quad & 1 & \quad & c \\
      & \quad &   & \quad &   & \quad &   & \quad & 0 & \quad & 0 & \quad & 1 & \quad & 0 & \quad & \text{RD Histogram} \\
      & \quad &   & \quad &   & \quad &   & \quad & 1 & \quad & 2 & \quad & 3 & \quad & 4 & \quad & \text{Index} \\
      & \quad &   & \quad &   & \quad &   & \quad & 0 & \quad & 0 & \quad & 1 & \quad & 1 & \quad & \text{Cache Hit Vector}\\
      \intertext{Increment chv and rd at index 4 - 1 + 2 - 1 , and sum the previous term from the chv}
    4 & \quad & 3 & \quad & 2 & \quad & 1 & \quad & 3 & \quad & 4 & \quad & 2 & \quad & 1 & \quad & r \\
1 & \quad & 2 & \quad & 3 & \quad & 4 & \quad & 2 & \quad & 1 & \quad & 3 & \quad & 4 & \quad & A \, \sigma(A) \\
      & \quad &   & \quad &   & \quad &   & \quad & 0 & \quad & 0 & \quad & 1 & \quad & 1 & \quad & c \\
      & \quad &   & \quad &   & \quad &   & \quad & 0 & \quad & 0 & \quad & 1 & \quad & 1 & \quad & \text{RD Histogram} \\
      & \quad &   & \quad &   & \quad &   & \quad & 1 & \quad & 2 & \quad & 3 & \quad & 4 & \quad & \text{Index} \\
      & \quad &   & \quad &   & \quad &   & \quad & 0 & \quad & 0 & \quad & 1 & \quad & 2 & \quad & \text{Cache Hit Vector}
\end{alignat*}
    
\begin{theorem}[Bruhat-Locality]\label{bruhat-locality}
    Let $S_m$ be the symmetric group of $m$ elements. For $\sigma \in S_m$ and some $C \leq m$, we can compute $\mathit{hits}_C(\sigma) = \mathit{hits}_C(A\,B)$, then
        \[ 
            \sum_{c = 1}^{m - 1}\mathit{hits}_c(\sigma) = \ell(\sigma).
        \] 
\end{theorem}
\begin{proof}
    We prove this by considering the algorithm above, which connects the reuse distance histogram (and thus the cache hit vector) to the inversion number. We consider all variables in the equation $r - 1 + i - \sum_{i=1}^{r(a)-1} c[i]$ that determines the reuse distance of an element. In a swap $(a\,b)$, the rank of each element is unchanged, and the index of one is increased and the index of the other is decreased by an equal amount, so these values do not affect the reuse distance histogram. 
    
    Since we have imposed the Bruhat order, the swap is guaranteed to increase the  inversion number of $\sigma(A)$ by exactly 1, and since $\sum_{i=1}^{r(a)-1} c[i]$ measures the inversion number of each element, the net increase of all sums across 1, resulting in a decrease by 1 of the index of an element. This improves the total reuse distribution (and the sum of cache hit vector) by 1.

\end{proof}
\begin{cor}
    The below formula is equivalent to \ref{bruhat-locality} 
    \[\sum_{c = 1}^{m}\mathit{hits}_c(\sigma) = m + \ell(\sigma).\]
\end{cor}

After reading our initial proof, a colleague Donovan Snyder provided an alternative proof. We include an adaption of this this below.
\begin{proof}[Snyder Proof]
For a permutation $\sigma$, define
\begin{equation*}
    \ell_{a}(\sigma) = |\{j: j>a, \sigma^{-1}(j)<\sigma^{-1}(a)\}|
\end{equation*}
 We can think of this as the number of items that come before $a$ that do not belong in the original permutation. Then, to count the number of elements with a reuse distance of c or less, we define the following: 
 \begin{equation*}
     h_c(\sigma) = | \{a : rd_{\sigma}(a) \leq c \} = \sum_{k=1}^c |rd_\sigma^{-1}(k)|
 \end{equation*} We then have \begin{align*}
    \sum_{c=1}^m h_c(\sigma) &= \sum_{c=1}^m\sum_{k=1}^c |rd_\sigma^{-1}(k)|\\
    &= \sum_{i=1}^m (m+1-i) |rd_{\sigma}^{-1}(i)|\\
    &= (m+1) \sum_{i=1}^m |rd_\sigma^{-1}(i)| - \sum_{i=1}^m |rd_\sigma^{-1}(i)|\\
    &= (m+1)m  - \sum_{i=1}^m rd_\sigma(i)\\
    &= m^2+m - \sum_{i=1}^n (m-i) + \sigma^{-1}(i) - \ell_i(\sigma)\\
    &= m^2 + m - \bigg( \frac{m(m-1)}{2} + \frac{(m+1)m}{2} - \ell(\sigma) \bigg)\\
    &= m^2+m - (m^2 - \ell(\sigma))\\
    &= \ell(\sigma)+m
\end{align*}
    
\end{proof}
\begin{theorem}
If $\sigma \bruhatlhd \tau$, $\mathit{mr}(c; \sigma) \leq \mathit{mr}(c; \tau), \forall c \leq m$.
\end{theorem}
\begin{proof}
    By \ref{bruhat-locality} and definition of $\lhd$:
    \begin{align*}
        &\sum_{c = 1}^{m - 1}\mathit{hits}_c(\tau) = \bigg( \sum_{c = 1}^{m - 1}\mathit{hits}_c(\sigma) \bigg) + 1 \\
        &\sum_{c = 1}^{m - 1}\left(\mathit{hits}_c(\tau) - \mathit{hits}_c(\sigma)\right) = 1
    \end{align*}
    This implies $\exists! c' < m$ such that:
    \[\mathit{hits}_{c'}(\tau) = \mathit{hits}_{c'}(\sigma) + 1\]
    Therefore, $c \neq c'$, $\mathit{mr}(c;\sigma) = \mathit{mr}(c;\tau)$ and $\mathit{mr}(c';\sigma) < \mathit{mr}(c';\tau)$
\end{proof}
  As a result of \ref{bruhat-locality} if $r_i(\mathcal{T})$ is the number of elements in $\mathcal{T}$ with reuse distance $i$ or smaller, we have 
\[
\sum_{i=1}^{m-1} r_i(\mathcal{T}) = \ell(\sigma)
\]
This statement is equivalent to Theorem \ref{bruhat-locality}. Note that since this sum truncates the highest possible reuse distance ($m$), a higher value is better for locality, as more elements with $r_i < m$ implies less elements with $r_i = m$.  Then, since $\ell$ structures the partial order of $S_m$ following the Bruhat order, we have a direct relation between the ordering imposed by $\ell$ and locality. That is, $\ell(\sigma) > \ell(\tau)$ implies that $\sigma$ has better temporal locality than $\tau$. This result and the theorems that lead to it are the basis of our symmetric theory of locality.

\begin{theorem}\label{opt-relabe-theorem}
    If $\sigma$ is the optimal reordering for $A$ such that $A\,\sigma(A)$ has better locality than $A\,\tau(A)$ for $\tau \neq \sigma$, then  $\sigma(A)\,A$ has better locality than $\sigma(A)\,\tau(A) $ for $\tau \neq e$. 
\end{theorem}
\begin{proof}
    Since Reuse Distance can be calculated backwards or forwards equivalently, $\sigma(A)\,A$ has the same temporal locality as $A\,\sigma(A)$, which is assumed to be optimal. Assume there was some other $\tau$ such that $\sigma(A)\,\tau(A)$ has better locality than $\sigma(A) \,A$. Then, by a relabeling argument, 
    \[\sigma(A)\,\tau(A) \stackrel{\mathclap{\tiny\mbox{locality}}}{\simeq} A\, \left(\left(\sigma \tau\right)(A)\right),\] 
    so $\sigma \tau $ has better locality than $\sigma$, which is a contradiction.
\end{proof}

\subsection{Experimental Results}
To support our symmetric theory of locality, we aggregate average cache miss ratio curves for each inversion number a permutation of a symmetric group can have in \ref{mrc}. To do this, we consider an element-wise average for each cache size. 
\begin{figure}
    \raggedleft
    \includegraphics[width=\columnwidth]{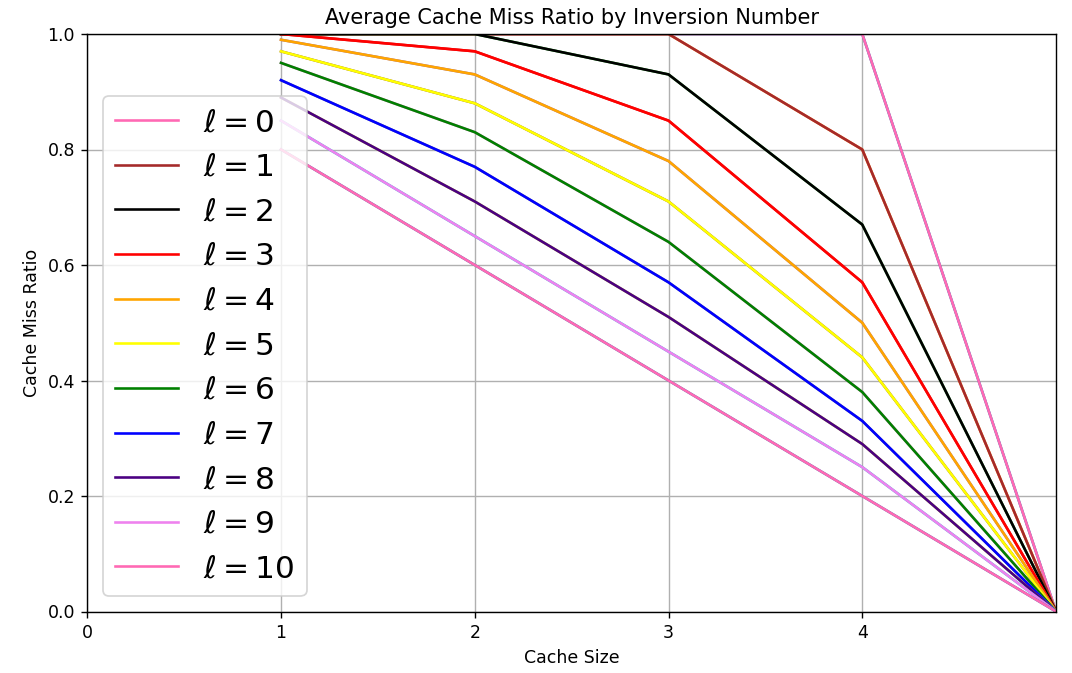}
    \caption{Average Miss Ratio Curve by Inversion Number for $S_5$}\label{mrc}
    \label{fig:avgs}
\end{figure}

We see that this depicts a clear trend and separation following the Bruhat ordering for symmetric groups, supporting the earlier theorems that prove a connection between the Bruhat ordering and locality. We also notice a decrease in convexity as $\ell$ approaches the maximum size. The trends continue for larger graph sizes, however graphs become hard to read for large cache and so we include only cache size up to 5.


\section{ChainFind Algorithm}\label{section:chainfind}
\subsection{Chains $\&$ Labelings}

The ChainFind algorithm is inspired by the construction of poset complexes from algebraic topology. \cite{wachs2006poset} Let $H = (S_m, \bruhatlhd)$ be a covering system of $S_m$ w.r.t. the Bruhat order. Let $\lambda$ be a EL-labeling edge labeler. The following algorithm is a greedy weighted breadth first search: given a current path, it finds the best possible path among feasible edges w.r.t. $Q$, and iteratively builds this path until some stopping point. The full algorithm in its entirety is described in algorithm \ref{alg:chainfinding}. The stack $P$ returned is the chain starting from $\tau_0$ to some $\tau_p$, where $p = |P|, \tau_p \leq \tau_{\mathit{rev}}$. The next focus is to design good edge labelings of $\lambda$ and study their consequences or variations.

With respect to the Bruhat order, all maximal chains in $S_m$ are of length $O(m^2)$. \cite{Abello2004} Algorithm \ref{alg:chainfinding} gives the best possible locality for a chain w.r.t. some totally ordered $Q$ edge-labels, where each edge label is calculated from the locality of the edge's destination node. The covering relation ensures that the branching explored by the algorithm is limited by at most, $|T| = O(m),$ where $T$ is the set of reflections of $S_m$. Therefore, the total runtime is $O(m^3)$.

The classification of the ChainFind algorithm as a greedy algorithm is proven in Appendix \ref{appendix:el-labeling}. In the context of compiler design, where performance is critical, classifying the algorithm as greedy highlights its efficiency in making locally optimal decisions at each step.

\begin{algorithm}
\caption{Chain Finding Algorithm}\label{alg:chainfinding}
\begin{algorithmic}
\Require $m$ is the number of data access elements
\Ensure $\tau_0 \in S_m, \tau_0 \leq \tau_{\mathit{rev}}, \tau_{\mathit{rev}} \text{ is the reverse permutation.}$
\State $\mathit{chain} \gets \mathit{Stack}(\tau_0)$
\While{$\mathit{size}\left(\mathit{chain}\right) < \frac{m(m + 1)}{2}$}
    \State $p_c \gets \mathit{peek}(\mathit{chain})$
    \State $\mathit{E}_{\mathit{size}} = \{y\,:\, p_c \lhd y\}$
    \State $next \gets max(\mathit{E}_{\mathit{size}})$
    \State $push(chain, next)$
\EndWhile
\State \Return $\mathit{chain}$

\end{algorithmic}
\end{algorithm}

\IEEEPARstart{I}{n} practice, finding chains should only be done for memory-loading intensive algorithms, For the proposed ChainFind algorithm, a running time of $O(m^3)$ may be too slow in cases that do not take advantage of intensive memory accesses. 
If the algorithm is being run multiple times, the chain finding algorithm can be done as part of a \emph{Just-in-time} (JIT) scheme by caching (parts of) the chain during profiling, which massively reduces overhead. \cite{jvmjit}

\subsection{Locality Ordering}

\noindent We propose several new and heuristic orderings that describe the locality of traces under permutation. For the purposes of this section, let $H$ represent the graph $(S_m, \bruhatlhd)$. Recall that $\mathit{hits}_c(\sigma)$ measures the locality of $\sigma$ with cache size $c$.

Before we delve into a more in-depth study on possible locality orderings, we must discuss feasibility. We say that some trace is \emph{feasible} if it is a possible trace that can be generated by a program. Due to topological dependencies and the AST structure of programs, not every trace is feasible for a given program. Thus, when computing the permutation of traces, we must make sure to stay within the feasible space.
\begin{definition}[Feasibility]
    An access trace is \textbf{feasible} if it can be generated from a given program $P$. Let $\mathcal{F}(P)$ be the space of feasible traces. \emph{Infeasibilities} may occur when there exist topological dependencies within a program that create restrictions on which elements (or edges in our graph) we can access while maintaining program correctness. In this case we focus our attention on a subset of the graph of $S_m$.
\end{definition}
In the context of making a labeler $\lambda$, we may define a binary \emph{feasible} function $Y$:
\[
Y(\mathcal{T}) = \begin{cases}
    1 & \mathcal{T} \in \mathcal{F}(P) \\
    0 & \text{ otherwise}
\end{cases}
\]
In terms of the lexicographically ordering, the chain generated should yield traces that are feasible. However, it is not necessarily \emph{maximal}, but in the context of program optimization, that does not matter.

For mathematical compatibility, we assume henceforth that every possible trace is feasible.

\subsubsection{Miss Ratio Labeling}
A naive way to design the edge labeler is to simply consider the lexicographic ordering of $\mathit{hits}_C$ (and likewise, the corresponding miss ratio curve $\mathit{MRC}$). 
\[\lambda(\sigma, \tau) = \mathit{hits}_C(\tau) = (\mathit{hits}_1(\tau), \mathit{hits}_2(\tau), \hdots, \mathit{hits}_m(\tau))
\]
The ChainFind algorithm will make a decision with comparing the number of cache access hits for $c = 1$ among all $\tau$. If the feasible traces are the same, then it will test $c = 2, 3, \hdots, m$. However, it is not a good labeling. Consider the counterexample at $\sigma = e$, and thus any $\sigma_i \in \mathcal{S}$ implies $e \lhd \sigma_i$. With respect to $c = 1$, we have $\mathit{hits}_1(s_i) = 0$, so $\lambda(e, s_i) \neq \lambda(e,s_j)$ for $i \neq j$.

One option to deal with this is to simply ignore it by creating an arbitrary tiebreaker, since as according to the current labeling, each of the feasible traces have equivalent locality. For compatibility with the good labeling property, the tiebreaker will have to be designed in a way to take into account the total ordering. Examples include a random tiebreaker and the $\sigma_i$ that described the edge. In particular, the usage of $\sigma_i$ is inspired by the \emph{standard labeling} of the $S_m$ as a Coexter group. \cite{phdthesis} 

\subsubsection{Ranked Miss Ratio Labeling}
Another option is to permute the miss ratio curve, which ranks the importance of cache size. Often, cache systems are hierarchical based on cache size, which also makes this option attractive. For $\psi \in S_m$, we define
    \begin{align*}
    \lambda^\psi(\sigma, \tau) &= \mathit{hits}_{\psi(C)}(\tau) \\
    &= \left(\mathit{hits}_{\psi(1)}(\tau), \mathit{hits}_{\psi(2)}(\tau), \hdots, \mathit{hits}_{\psi(m)}(\tau)\right).  
\end{align*}
    
Different cache sizes are preferred w.r.t. $\phi$. For example, if we let $\psi(1) = m - 1$, then the counterexample previously mentioned does not need a tiebreaker. However, in practice, this does not alleviate the problem; we give an example of generating a chain from $S_{11}$, with $\psi = (1\;10\;9\;8\;7\;6\;5\;4\;3\;2)$, which essentially slides $\mathit{hits}_{c = 10}$ value in front of the $\mathit{hits}_{C = 11}$ vector. The total length of the chain is $66$, but there is a factor of $9$ different chains that could be made, as compared to $\psi = e$, where there is a factor of $14$ different chains. Figure \ref{fig:chain_choice} shows that as the $m$ from $S_m$ increases, the factor of chains that could be made also roughly increases, which means that the chain generated from the algorithm is not distinct with respect to $\lambda^e$ (as well as any $\phi \neq e$).

\begin{figure}
    \raggedright
    \includegraphics[width=\columnwidth]{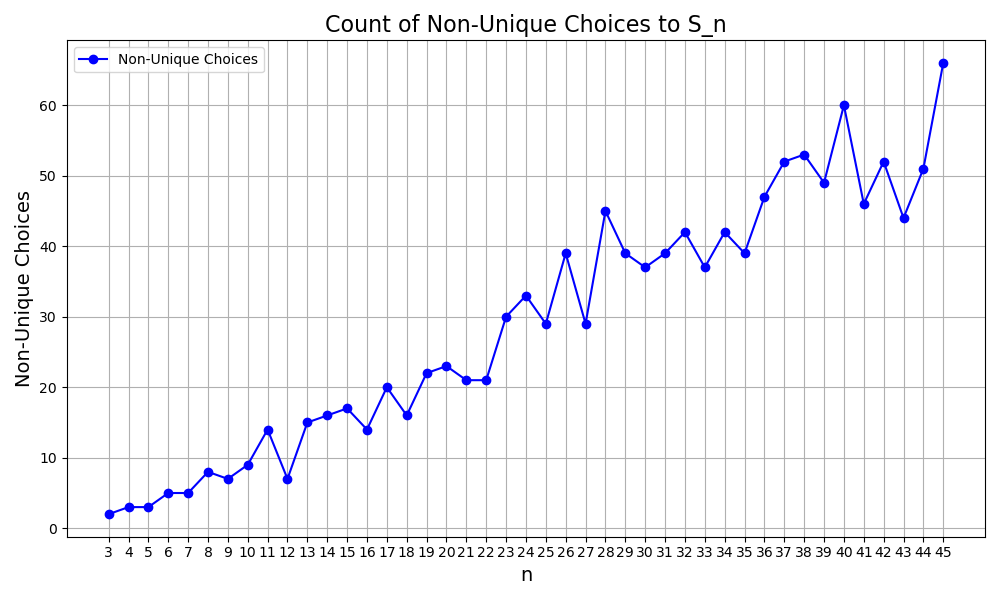}
    \caption{Plot of $\lambda^e$: $S_n$ vs. the count of arbitrary choices that ChainFind had to decide.}
    \label{fig:chain_choice}
\end{figure}

\section{Discussion}\label{discussion}
We present the following \textbf{open problem}:
\begin{problem}\label{problem-3}
    Does there exist an EL-labeling (\ref{el-label}) $\lambda$ that is dependent  \emph{precisely} on locality? In other words, is the \emph{ChainFind} algorithm that depends on this $\lambda$ both optimal and computationally efficient?
\end{problem}
To our knowledge, we do not know if it is possible to design such a $\lambda$. In order to find a \emph{good labeling}, we attempted to propose several other orderings, where the most notable were \emph{timescale locality} \cite{relational-locality} and \emph{Data Movement Complexity} \cite{Smith+:ICS22}, among others. In order to advance this problem, we suggest first coming up with a proper \emph{good labeling}, and in turn, transforming to a full \emph{EL-labeling}.

The solution to problem \ref{problem-3} yields a particularly intriguing insight into the nature of program optimization concerning locality: a valid solution would suggest that the optimal strategy is greedy. This finding could open up new avenues for exploring the interplay between greedy algorithms and locality-aware optimization in caching systems.

It is also unclear if the construction of an EL-labeling does matter; maybe a good labeling suffices for most program optimizations. 

For the rest of the section, we discuss several areas of exploration to take advantage of symmetric locality. In addition to the below sections, we would like to add that a positive effect of our \emph{ChainFind} algorithm always takes program correctness into account with the addition of the feasibility boolean function $Y.$ 

\subsection{Application To Deep Learning}

\subsubsection{Permutation Equivariance}
\noindent Recently, permutation equivariant deep learning models have been developed for their flexibility, ability to address privacy \cite{equitrans}, and handle weight-space tasks \cite{penf}.  We define a function $f$ to be permutation equivariant if, for some $\sigma \in S_n$:
\begin{equation*}
    \sigma f(x) = f(\sigma x)
\end{equation*}
Functions found to hold this property include element-wise operators, softmax, linear layers, normalization layers, attention, and others \cite{equitrans}.  Since backward propogation is also permutation invariant, we can apply permutations to full models, including MLPs, CNNs, transformers \cite{equitrans}, and GNNs~\cite{egnn}.

\IEEEPARstart{H}{owever}, this concept does not apply to all data types. The data must also be permutation invariant so it does not lose meaning when permuted.  We define the notion of "order" for data to be reflect whether you can permute elements of data without information loss. Some data is completely unordered, like a set of stock prices, while other data is completely ordered, such as a novel. We will refer to unordered data as a set. We also encounter notions of partially ordered data, where the relative order of some elements must hold, but variance is permitted. Examples include a series of sentences where the 
sentences can be permuted but words within can not,
or randomly sampled movements of particles over time where the order of the time stamps matters but the order of particles within the time stamps do not.

\subsubsection{Optimizing Locality}
\noindent We propose how to optimize locality for different orders of data. We investigate the case of linear layers (viewed as matrices) for MLPs, and assume a permutation equivariant activation function. We establish an essential theorem to assess repeated accesses of a sequence $A$.
As a result of theorem \ref{opt-relabe-theorem}, we can see that if $\sigma(A)$ is the optimal reordering of $A$, then the optimal reordering of accessing a sequence $A\,A\,A\,A\,A\dots$ is $A\,\sigma(A)\,A\,\sigma(A)\,A\dots$.
Applying this to learning models, first time we encounter a linear layer, we compute without permutation. However, the second time we access a weight (in backpropogation), we access the tensor with regards to the optimal permutation while calculating loss. Then, upon seeing the weight again we resume initial order. Similarly, this concept can be applied to the key, value, and projection matrices in multihead-attention to greatly improve the memory efficiency of transformers. This can be applied to both training and deployment, as this optimization permits but does not necessitate backpropagation. \\ To illustrate the benefit of this, we provide a comparison of reuse distances for repeated $n \times m$ matrix (input to MLP) accessed in cyclic and sawtooth  re-traversal order. In cyclic order, we see $nm$ elements each with reuse distance $nm$, yielding a total reuse of $n^2m^2$. However, a sawtooth traversal would yield $\sum_{i=1}^{nm} i = \frac{nm(nm+1)}{2}$ reuse distance. The leading term is halved, leading to a significant improvement of temporal locality. 

\IEEEPARstart{F}{or} unordered sets of data, the sawtooth permutation should be applied to optimize locality. Considering partially ordered data, we refer to the covering graph to inform our decision. The highest access order on the covering graph that preserves the partial order of the data should be selected to optimize locality. For completely ordered sets, we cannot permute data, so our optimization does not apply.

\subsection{Instruction Locality}
We proposed that this new theory can be applied under the guise of a compiler or JIT optimization. It would reorder program instructions when they are executed repeatedly.  In the context of very simple programs, it is immediately applicable.   Instruction scheduling, however, serves other purposes such as instruction level parallelism and register reuse.  Also, typical loop bodies fit in instruction cache and may not benefit further from the reordering optimization.

\subsection{Reordering Algorithms}
Reordering algorithms, such as those used in preprocessing for GNNs \cite{reorder}, intend on optimizing ordering to maximize spatial and temporal locality by relabeling the nodes of graphs. Our analysis focuses on temporal locality, and could prove to be a useful tool to improve reordering efforts for subsets of graphs that undergo repeated traversals, such optimizing algorithms traversing as a set of vertices that share many neighbors. 

\subsection{Non-periodic Data Reuse}
For the periodic trace $\mathcal{T} = A \, B$, each data is reused at most once. This is a problem as in real world caches, data is reused any arbitrary amount of times. To support this new theory, we would have to compare general traces to periodic traces.

\subsection{Limitations}
\subsubsection{Non-periodic Data Reuse}
One limitation is that the theory targets only data re-traversals. Modeling by permutations does not cover the extent of all real world cache accesses.  
The theory models fully associative LRU caches only and does not consider the effect of parallelism.  
While memory performance can greatly benefit from latency hiding, e.g., prefetching, we focus on locality and consider only the data movement, not the running time.

\section{Final Remarks}\label{conclusion}
We establish a theory of symmetric locality\ref{symmlocaldef}, comparing data re-traversals and providing guidelines for optimization opportunities. Fundamentals of group theory, in particular the Bruhat order, are outlined to ground our work. We endow a concept of good labeling, and provide a chain finding algorithm for efficiently finding good labelings alongside an algorithm for efficiently computing reuse distance. Symmetric locality is particularly useful for programs with code or data dependencies, in which sawtooth traversal may not be applicable.

The theory of symmetric locality proposed is not complete. We outline possible future research directions.  To motivate a usage to implement in applications in compiler designs or systems, the parallelism or scheduling problem is a very important issue to tackle. Not all programs have data reuse of its accesses being at most one, so the theory may not applicable for all programs. Expanding the applicable class size and addressing multi-threading will allow for a more comprehensive theory of symmetric locality.

\section*{Acknowledgments}
We would like to thank the following for proofreading this paper: Jack Cashman, Leo Sciortino, Willow Veytsman, Woody Wu \& Yiyang Wang. We wish to thank Donovan Snyder for his alternative proof of Theorem \ref{bruhat-locality}.  




\bibliographystyle{IEEEtran}
\bibliography{refs}

\section{Appendix}\label{appendix}
\subsection{Groups} \label{groups}
\begin{definition}
    A group $\mathcal{G}$ is a set of elements with a binary operator $\ast$ satisfying the following relations: 
    \begin{itemize}
        \item \emph{Closure}: $\forall a,b \in \mathcal{G}, a \ast b \in \mathcal{G}$
        \item \emph{Associativity}: $\forall a,b,c \in \mathcal{G}$, $(a \ast b) \ast c = a \ast (b \ast c)$
        \item \emph{Identity}: $\exists ! e \in \mathcal{G}$ such that $\forall a \in \mathcal{G}$, $e \ast a = a$ and $a \ast e = a$.
        \item  \emph{Invertibility}: $\forall a \in \mathcal{G}$, $\exists ! b \in \mathcal{G}$ such that $a \ast b = e$ and $b \ast a = e$.
    \end{itemize}
\end{definition}

\begin{definition}
    A subset $\mathcal{S}$ of elements of a group $\mathcal{G}$ is called a generator if all elements in $\mathcal{G}$ can be expressed as finitely many combinations of elements in $\mathcal{S}$ and the inverses of elements in $\mathcal{S}$.
\end{definition}

\begin{definition}
    A set $\mathcal{R}$ of relations is a set of constraints placed on elements. For example, a constraint could be that for elements $a, b \in \mathcal{G}$, $a \ast b = b \ast a$.
\end{definition}

\begin{definition}
    A presentation $\braket{\mathcal{S}|\mathcal{R}}$ is a set of generators and rules that generate a group. This means that to describe a group, all we would need are the set of its generators and its relations.
\end{definition}

\subsection{Symmetric Group} \label{symmetric_group}

\begin{definition}
    A permutation is a bijective mapping $\sigma: A \to A$ that often serves to rearrange elements of $A$. 
\end{definition}

\begin{definition}
    A symmetric group $S_m$ defined over a set of $m$ objects is a group whose elements are permutations of these objects, and whose operator is function composition.
\end{definition}

\begin{definition}
    We can describe any permutation as a composition of k-cycles, which is a cycle with $k$ elements. For example, (1,2, 3) is a 3-cycle, while (1,3)(2,4) is 2 2-cycles. We say that any cycle of 2 elements $(a,b)$ is called a \textbf{transposition}.
\end{definition}

\begin{lemma} \label{cycledecomp}
    \emph{Cycle Decomposition Theorem}:  Any k-cycle $(a_1 \dots a_k)$ can be decomposed into not necessarily distinct series of transpositions: \[(a_1 \dots a_k)  = (a_1 a_k) (a_1 a_{k-1} ) \dots (a_1 a_2).\] 
\end{lemma}While this decomposition is not unique, its parity is.
\begin{definition}
    For the symmetric group $S_m$, let $A$ be the set of \textbf{adjacent transpositions}, also known as swaps: \[A = \{(i, i + 1) : \forall i \in [m - 1]\}.\] For $i \in [0, m - 1]$ we will refer to $\sigma_i$ as the adjacent transposition $(i, i+1)$. $A$ is also the set of \textbf{generators} of $S_m$; we can obtain any permutation by composing any finite number of swaps.
\end{definition}

\begin{definition}
    Let $T$ represent the set of reflections of $S_m$:
    \begin{equation*}
        T = \{\tau \sigma_i \tau^{-1} : \sigma_i \in A, \tau \in S_m \}
    \end{equation*}
    $T$ describes swaps between any two elements.
\end{definition}


\begin{lemma}
    The symmetric group $S_m$ is a Coxeter group with the set of reflections $T$ \cite{phdthesis}. The proof starts with first noticing that $S_m$ is a presentation $\braket{\mathcal{S}|\mathcal{R}}$, such that
    \begin{align*}
        \mathcal{S} &= A \\
        \mathcal{R} &= \begin{cases}
            \sigma_i \sigma_j          = \sigma_j \sigma_i & |i - j| \geq 2 \\
            \sigma_i \sigma_j \sigma_i = \sigma_j \sigma_i \sigma_j & |i - j| = 1
        \end{cases}
    \end{align*}
    where $A$ is the set of adjacent swaps. \cite{phdthesis}
\end{lemma}


\subsection{Partial Order}
The following is also covered in \cite{phdthesis}. 
\begin{definition}
Let $(P, \leq)$ be a set with a partial ordering $\leq$, that is $\forall x, y, z \in P$:
\begin{itemize}
    \item \emph{Reflexivity}: $x \leq x$
    \item \emph{Transitivity}: If $x \leq y$ and $y \leq z$ implies $x \leq z$
    \item \emph{Anti-symmetry}: If $x \leq y$ and $y \leq x$ implies $x = y$
\end{itemize}
\end{definition}

Note, this doesn't imply that every single element is comparable. If this is the case, then the partial ordering becomes a total ordering.

If $P$ is a nonempty set, then we say that $\alpha = \sup P$ is the supremum of P if $x \leq \alpha, \forall x \in P$. Likewise, we say that $\beta = \inf P$ is the infirmum of P if $\beta \leq x, \forall x \in P$. If both of these exist, then P is bounded.  

\subsection{Covering \& Order}\label{appendix:covering}
We seek to establish a covering system and order on the symmetric group $S_m$ in order to institute structure for analysis. This yields a natural graph representation of $S_m$, which we wish to traverse while maintaining optimal locality at each step. This section establishes necessary building blocks to achieve these goals.
\begin{definition}
    Let $\sigma, \tau \in S_m$, and let $T \subset S_m$ be the set of all transpositions (swaps). Then if $\tau  = s_1 s_2 \hdots s_p, \: s.t.\,\: s_j \in T$, we define an operator $\bruhatleq$ below:
    \[\sigma \bruhatleq \tau \iff \sigma = s_{i_1} s_{i_2} \hdots s_{i_q}\]
    where $i_1, i_2, \hdots, i_q$ is a subsequence of $1, 2, \hdots, p$. The operator $\bruhatleq$ is defined as the \textbf{Bruhat Order}.
\end{definition}

\begin{definition}
    Let $H = (V, E)$ be a directed graph of $S_m$ wrt to $\leq$, where $V = S_m$, and 
    \[ E = \{(\sigma, \tau)\,:\, \ell{(\tau)} = \ell{(\sigma)} + 1 \land \sigma \bruhatleq \tau\}\]
    In other contexts, $H$ is also called a \emph{covering system}. Since $\ell$ precisely constructs the edges, we also equivalently write $H = (S_m, \ell)$. By \ref{invlemma}, the edge also corresponds to the $s_i \in T$ such that $\tau = \sigma s_i$.
\end{definition}
\emph{Example}: Let $\sigma = (13)$ and $\tau = (14)(13)$. $\ell(\sigma) = 3$ and $\ell(\tau) = 4$ by definition of inversion. In terms of the cycle decomposition, we can see that 
\begin{align*}
    \sigma &= (12)(23)(12) \\
    \tau   &= (12)(23)(34)(23)(12)(12)(23)(12) \\
           &= (12)(23)(34)(23)(23)(12) \\
           &= (12)(23)(12)(34)
\end{align*}
The decomposition of $\sigma$ is a subsequence of $\tau$, so $\sigma \leq \tau$. Also, the above elements make an edge in $H$, by inspecting the decomposition length $\ell.$

To clarify notation, we also denote
\begin{align*}
    x \bruhatlq y &\iff x \bruhatleq y \land x \neq y \\
    x \bruhatlhd y &\iff x \bruhatlq y \land \ell{(y)} = \ell{(x)} + 1
\end{align*}

\subsection{Chains \& EL-Labeling}\label{appendix:el-labeling}
\begin{definition}
    A \textbf{chain} is a totally ordered collection of elements from $S_m$. A \textbf{saturated chain} is a chain with maximal possible length.
\end{definition}
Let $Q$ be a totally ordered set. We use the edge labeler $\lambda$ below to describe the process of labeling edges with element of $Q$:
\[ \lambda: \{(x, y) \,:\, x \bruhatlhd y\} \to Q \]
\begin{definition}\label{el-label}
    Let $\lambda$ be a edge labeling function. It is an \textbf{EL-labeling} (Edge Lexicographic labeling)\footnote{This property leads to another "nice" property known as EL-shellability.}, if $\forall x, y \in S_m, x < y$:
    \begin{enumerate}
        \item \label{el-label-1} There is exactly one saturated chain from x to y such that the labels are in a non-decreasing order.
        \item \label{el-label-2} For any 2 saturated chains, the labels of one chain is smaller than the other w.r.t. the dictionary order. 
    \end{enumerate}
    A chain with (2) w.r.t to any other is called a \textbf{minimal}\footnote{Maximal refers to the length of the chain. Minimal refers to its labels.} chain. 
\end{definition}
We would like to point out that the construction of an \emph{EL-labeling} directly implies the optimality \ref{el-label}.\ref{el-label-1} and efficiency \ref{el-label}.\ref{el-label-2} of a greedy algorithm, we call this algorithm the \emph{ChainFind} algorithm.
\begin{definition}
    Let $Q$ be a totally ordered set (of labels). For a successor system $H$ of $S_m$, we say that $\lambda: H \to Q$ is a \emph{good labeling} if \[\forall x, y, z \in H, x \bruhatlhd y \land x \bruhatlhd z,\; \lambda(x, y) = \lambda(x, z) \implies y = z\]
    In other words, the multiple choices that increase the length of a chain are distinct wrt $\lambda$ (or what is more commonly known as a bijective property).
\end{definition}
\begin{lemma}
    If $\lambda$ is a good labeling, then \ref{el-label}.\ref{el-label-2} is fulfilled. \cite{phdthesis}
\end{lemma}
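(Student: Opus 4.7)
The plan is a first-divergence argument applied to any two distinct saturated chains with the same endpoints $x < y$ in $P$. I would write
\begin{equation*}
C_1 : x = x_0, x_1, \ldots, x_n = y, \qquad C_2 : x = x_0', x_1', \ldots, x_n' = y,
\end{equation*}
where each consecutive pair is a covering pair in $H$. Because $H$ is a covering system and each edge increases $\rank$ by exactly one, both chains have the common length $n = \rank(y) - \rank(x)$, so they can be compared term-by-term. Assuming $C_1 \neq C_2$, I would take the smallest index $i$ with $x_i \neq x_i'$; then $x_{i-1} = x_{i-1}'$, and both $(x_{i-1}, x_i)$ and $(x_{i-1}, x_i')$ are edges of $H$ out of a common vertex.

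Next I would invoke the good labeling hypothesis at that divergence vertex. The contrapositive of
\begin{equation*}
\lambda(x_{i-1}, x_i) = \lambda(x_{i-1}, x_i') \;\Longrightarrow\; x_i = x_i'
\end{equation*}
forces $\lambda(x_{i-1}, x_i) \neq \lambda(x_{i-1}, x_i')$; since $Q$ is totally ordered, one of the two is strictly smaller. The label sequences of $C_1$ and $C_2$ therefore agree on positions $1, \ldots, i-1$ and disagree strictly at position $i$, which is precisely the condition for one sequence to strictly precede the other in the dictionary order on $Q^n$. Because the pair of chains was arbitrary, property \ref{el-label}.\ref{el-label-2} of the EL-labeling definition is established.

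The argument is a short direct deduction, so I do not anticipate a real obstacle. The one point that deserves care is the uniform length of saturated chains sharing both endpoints, which is what lets me align the label sequences index-by-index; I would justify this from the rank function intrinsic to any covering system rather than from any Bruhat-specific structure on $S_m$, so that the lemma is proved in the full generality in which \emph{good labeling} was defined.
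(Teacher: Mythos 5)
The paper does not actually prove this lemma; it is stated as imported from the cited thesis, so there is no in-paper argument to compare against. Your first-divergence proof is the standard way to establish it and is correct as applied to the paper's literal phrasing of condition \ref{el-label}.\ref{el-label-2}: the equal-length claim follows from the rank function in the definition of a covering system (every edge of $H$ raises $\rank$ by exactly one, so any two saturated chains from $x$ to $y$ have length $\rank(y)-\rank(x)$), the good-labeling property forces distinct labels on the two distinct edges leaving the first divergence vertex, and total order on $Q$ then decides the lexicographic comparison at that position. Supplying this argument explicitly is a genuine improvement over the paper's bare citation. One caveat worth flagging: in the standard literature the second EL-labeling condition is stronger than what the paper writes --- it requires that the \emph{unique weakly increasing} chain from condition \ref{el-label}.\ref{el-label-1} be lexicographically least among all saturated chains from $x$ to $y$. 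A good labeling alone does not yield that stronger form (it only guarantees distinct, hence comparable, label sequences); to get it you would additionally need condition \ref{el-label}.\ref{el-label-1} and an argument that the increasing chain wins every first-divergence comparison. Your proof is complete for the statement as written, but if the authors intend the textbook definition, the lemma itself is overstated and your argument correctly exposes where the extra work would be needed.
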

A good labeling ensures that the chain finding algorithm retains its $O(m^3)$ running time. If a good labeling also satisfies $\ref{el-label-1}$, then the generated chain is optimal in terms of the $\lambda$ defined.


\subsection{Miscellaneous Characterizations} \label{misc}
The following observations provide additional facts obtained from working with re-traversals.\\
We can characterize each level by noting that possible cache hit vectors are all possible integer partitions for $\ell = n$. Furthermore, if we add all occurrences of each integer partition for a rank of the symmetric group, we find the Mahonian number $M(m,n)$, which counts how many permutations of $m$ elements have $n$ inversions. How to count the number of cache vectors with a specific integer partition is an open problem. Furthermore, if we were to integrate  the normalized truncated cache miss vector (the normalized cache hit vector without the last element), we would note that vectos with the same inversion number evaluate to the same value, and the value of integrals drops from 1 (at the identity) to .5 (at sawtooth) with slop $\frac{1}{m(m-1)}$.

\end{document}